\documentclass[letterpaper, 10 pt, conference]{ieeeconf}  

\IEEEoverridecommandlockouts                              
\usepackage{amsmath,amssymb,latexsym,mathrsfs,amssymb} 
\usepackage{array}
\usepackage{hyperref}
\usepackage[dvipsnames]{xcolor}
\usepackage{xfrac}
\usepackage{algorithm}
\usepackage{algorithmic}
\let\labelindent\relax
\usepackage{enumitem}
\usepackage{comment}

\newcommand{\bz}[1]{\textcolor{blue}{#1}}

\title{\LARGE \bf
Resource Allocation in Electricity Markets with Budget Constrained Customers
}

\author{Lila Perkins and Baosen Zhang
\thanks{The authors are partially supported by NSF grant ECCS-1942326 and University of Washington Department of Applied Mathematics Boeing Fellowship.}
\thanks{L. Perkins is with the Department of Applied Mathematics, University of Washington, Seattle {\tt\small lfdp@uw.edu}}%
\thanks{B. Zhang is with the Department of Electrical and Computer Engineering, University of Washington, Seattle {\tt\small zhangbao@uw.edu}}%
}

\newtheorem{theorem}{Theorem}

\newtheorem{lemma}{Lemma}

\newcommand{\R}{\mathbb{R}}
\newcommand{\PP}{\mathcal{P}}

\newcommand{\st}{\textrm{s.t.}}

\begin{document}

\maketitle
\thispagestyle{empty}
\pagestyle{empty}

\begin{abstract}
In electricity markets, customers are increasingly constrained by their budgets. 
A budget constraint for a user is an upper bound on the price multiplied by the quantity.
However, since prices are determined by the market equilibrium, the budget constrained welfare maximization problem is difficult to define rigorously and to work with. 
In this letter, we show that a natural dual-ascent algorithm converges to a unique competitive equilibrium under budget constraints. Further, this budget-constrained equilibrium is exactly the solution of a convex welfare maximization problem in which each user's utility is replaced by a modified utility that splices the original utility with a logarithmic function where the budget binds. We also provide an explicit piecewise construction of this modified utility and demonstrate the results on examples with quadratic and square root utility functions.
\end{abstract}

\section{Introduction}
The  rising price of electricity has received significant attention from customers, policy makers, industry and academia~\cite{wiser2025factors,Penn2025}. In this paper, we focus on one aspect that has been brought to the forefront by the increase in prices, the fact that customer budgets are becoming binding constraints~\cite{Mitovich25,baker2023metrics}.  

As shown in several studies, electricity bills are growing as part of customer income~\cite{baker2023metrics,brown2020high}. In addition, customers are becoming much more cognizant of the amount of money they spend on electricity and will adjust their usage according to their budgets~\cite{mamkhezri2025out,borenstein2007customer}. This behavior is not restricted to residential users, since the cost of electricity for compute has entered the operational decision of large AI loads~\cite{liu2014pricing,chen2025electricity}. Therefore, it is of interest to model and understand electricity markets with budget constrained users. 

 



In this paper, we study an idealized model of electricity markets where each user derives some utility from consuming electricity, and the system incurs some cost based on the sum of the users' consumptions. The objective is to maximize social welfare, defined as the sum of the user utilities minus the system cost. This model has been widely used in the literature, both in the context of power systems and other resource allocation problems~\cite{kirschen2018fundamentals,walrand2008economic}. Prices, also called marginal costs, are the dual variables of power balancing constraints. A fundamental result is that they  support a competitive equilibrium if convexity conditions are satisfied~\cite{varian2014intermediate,berry2013economic}. 

Budget constraints take the form of an upper bound on the product of price and quantity of each user. However, it is not obvious how the budget constraint should be included in the resource allocation problem. Since prices in the system depend on the solutions (they are the dual variables), a budget constraint is somewhat self referential and is more well-defined once the problem has been solved. 

Several results have suggested that markets and games with budget constraints are difficult to study analytically~\cite{maddock1992estimating,caragiannis2018efficiency}, with the possibilities of nonconvexities and multiple equilibria. Consequently, some studies assume that users are very small and prices stay roughly constant~\cite{ogunjuyigbe2017user,lesic2018consumers}. However, given the rapid growth of large loads in the power system~\cite{lauby2025reliability}, understanding the relationship between consumption and price is important. In another direction, empirical studies of users with budget constraints have been conducted~\cite{parco2005two}. 

In this paper, we extend the resource allocation problem to rigorously include budget constraints. We show that under these constraints, with standard convexity assumptions, there exists a unique price that supports a competitive equilibrium. In addition, we provide an equivalent primal interpretation of the problem, by showing that budgets can be represented by modified utility functions and removing the need to explicitly consider prices in the constraints. We show that this modified primal problem is convex, thus allowing many existing results in resource allocation to be applied to budget constrained problems.



\section{Problem Formulation}
\subsection{Setup} \label{sec:setup}
Consider $n$ users with utility functions $u_i : \R_+ \to \R$, consumed quantities $x_i \geq 0$, and budgets $b_i \geq 0$. These users are served by a system operator with a cost function $C: \R_+ \to \R$. We assume that the cost depends only on the sum of the quantities. 

By introducing an auxiliary variable $y$ for total consumption, the welfare maximization problem is \begin{subequations} \label{eqn:primal}
\begin{align}
    (\PP) \quad \max_{x,y} & \sum_{i = 1}^n u_i(x_i) - C(y) \\
      \st \;\; & x_i \geq 0 \quad \forall i \\
      & \sum_{i = 1}^n x_i = y  \label{eqn:power_balance} \\
      & \lambda x_i \leq b_i \quad \forall i \quad \text{(budget)} \label{eqn:budget}
\end{align}
\end{subequations}
where the budget constraint $\lambda x_i \leq b_i$  involves some price $\lambda$ (with units of \$/quantity). We note that this problem is not actually well-defined, since it is not obvious what $\lambda$ should be. 

A seemingly natural choice for $\lambda$ seems to be interpreting it as the derivative of the cost function $C'(y)$, and replacing \eqref{eqn:budget} by $C'(y) x_i \leq b_i$. This choice turns out to be incorrect because it does not satisfy the economic interpretation of price: where it should support a competitive equilibrium between the system operator and the users~\cite{kirschen2018fundamentals}. It is easy to check that even when $u_i$'s are linear and $C$ is quadratic, if $\lambda$ in \eqref{eqn:budget} is replaced by $C'(y)$, the centralized optimal solution is not an equilibrium solution.\footnote{The problem is in general not convex, but the solution is not hard to compute in a two-player game with linear utilities and a quadratic cost.}  

Therefore, the correct interpretation of $\lambda$ should be the dual variable of the power balance constraint \eqref{eqn:power_balance}, as is in a problem without budget constraints. However, in this case, the budget constraint becomes somewhat self referential, since it depends on a dual variable that needs to be computed while solving the problem. 

The rest of the paper answers two questions: 1) Is there a procedure to solve \eqref{eqn:primal} and obtain a price $\lambda$ that is a competitive equilibrium? 2) Is there an equivalent primal problem to \eqref{eqn:primal} without the need to involve a dual variable in the constraint? For the first, we show that there is a unique equilibrium price, and it can be obtained through a modified version of the standard dual ascent algorithm. For the second, we show that there is an equivalent primal formulation, and the problem is convex if all $u_i$'s are concave and $C$ is convex.  



\textbf{Assumptions}
Throughout this paper, we assume that each $u_i$ is strictly concave and continuously differentiable on $\mathbb{R}_+$ and $C$ is strictly convex, coercive and continuously differentiable on $\mathbb{R}_+$. To avoid the trivial solution of $x_i^*=0$ for all $i$, we assume that there exists at least one user $i$ such that $u_i'(0) > C'(0)$.




\subsection{Price Equilibrium in the Unconstrained Case} \label{sec:unconstrained-price-equilibrium}
Before addressing the budget constrained case, we briefly review how prices arise and reach an equilibrium in the unconstrained case. This will in turn serve as a template to how we handle the budget constrained problem. 


Without budget constraints, the welfare maximization problem becomes 
\begin{subequations} \label{eqn:P_uncon}
\begin{align}
    \max_{x,y} \quad & \sum_{i=1}^n u_i(x_i) - C(y) \\
    \text{s.t.} \quad & x_i \geq 0 \quad \forall i \\
                      & \sum_{i=1}^n x_i = y \label{eqn:balance}
\end{align}
\end{subequations}
which is a convex optimization problem. 
Let $\lambda$ be the dual variable associated with the power balance constraint \eqref{eqn:balance}. The (partial) Lagrangian is \begin{align*}
    \mathcal{L}(x, y, \lambda) &= \sum_{i = 1}^n u_i(x_i) - C(y) + \lambda \left(y - \sum_{i = 1}^n x_i \right) \\
    &= \sum_{i = 1}^n -\big(\lambda x_i - u_i(x_i) \big) + \big( \lambda y - C(y)\big)
\end{align*} 
and its dual is
\begin{align}
    g(\lambda) &= \sum_{i = 1}^n -\min_{x_i \geq 0} \big\{ \lambda x_i - u_i(x_i) \big\} + \max_{y \geq 0} \big\{\lambda y - C(y) \big\}. \label{eqn:g(lam)}
\end{align}


The dual problem \eqref{eqn:g(lam)} is separable and can be solved using a standard distributed algorithm. Specifically, at a given price $\lambda^k$, each user independently solves $
    x_i(\lambda^k) = \arg \max_{x_i \geq 0} u_i(x_i) - \lambda^k x_i,$
the system operator solves $
    y(\lambda^k) = \arg \max_{y \geq 0} \lambda^k y - C(y),$
and the price is updated via dual ascent $
    \lambda^{k + 1} = \lambda^k + \alpha^k \left(\sum_{i = 1}^n x_i(\lambda^k) - y(\lambda^k) \right).$
This process converges to $\lambda^*$ under standard step-size conditions if $u_i$'s and $C$ satisfy the assumptions in Sec.~\ref{sec:setup}~\cite{bertsekas1997nonlinear,boyd2004convex}. 


\section{Budget-Constrained Competitive Equilibrium}
In this section, we present a framework for handling budget constraints. We show that a unique competitive equilibrium exists under budget constraints and derive an iterative algorithm to find it. Then in Sec.~\ref{sec:primal} we show that we can interpret the budget constrained problem as a modified primal problem. 


Our strategy to find the equilibrium price mirrors the unconstrained case. Note that given a price $\lambda$, the budget constrained user $i$'s optimization problem  becomes 
\begin{align*}
    x_i^*(\lambda) = \arg \max_{x_i \geq 0} u_i(x_i) - \lambda x_i \quad \st \quad \lambda x_i \leq b_i
\end{align*} 
Since $u_i$ is strictly concave, the unconstrained maximum is attained by $x_i^* = \max(0,(u_i')^{-1}(\lambda))$ from the first order conditions.  Then, the budget constraint $\lambda x_i \leq b_i$ caps the user's expenditure, resulting in the closed form 
\begin{align}
    x_i^*(\lambda) = \max \left\{0, \min\left\{ (u_i')^{-1}(\lambda), \frac{b_i}{\lambda} \right\} \right\} 
    \label{eqn:x(lam)_closed-form}
\end{align}
where, when the budget is binding, the user consumes $\frac{b_i}{\lambda}$ instead of the unconstrained maximum. 

The system operator's problem stays unchanged from the unconstrained case:
\begin{align}
    y^*(\lambda) = \arg \max_{y \geq 0} \lambda y - C(y) = (C')^{-1}(\lambda). \label{eqn:y(lam)}
\end{align}

\subsection{Algorithm}
We consider a natural extension of the iterative implementation for the unconstrained case to the budget-constrained case.
\begin{algorithm}[ht]
    \caption{Budget-Constrained Price Update}
    \label{alg:constrained-price-update}
    \begin{algorithmic}[1]
        \REQUIRE utility functions $u_i$, budgets $b_i$, cost function $C$, initial price $\lambda^0 > 0$, tolerance $\varepsilon > 0$
        \ENSURE 
        \FOR{$k = 0, 1, 2, \ldots$}
            \STATE user $i$ solves $
            x_i^k = \arg \max u_i(x_i) - \lambda^k x_i$, \\ $\st \quad x_i \geq 0, \quad \lambda^k x_i \leq b_i
            $
            \STATE system operator solves $y^k = \arg \max_{y \geq 0} \lambda^k y - C(y)$
            \STATE the price is update via $
            \lambda^{k + 1} = \lambda^k + \alpha^k \left( \sum_{i} x_i^k - y^k \right)
            $
            \IF{$|\lambda^{k + 1} -  \lambda^k| < \varepsilon$}
                \STATE \textbf{Return:} $\{x_i^k\}$, $y^k$, $\lambda^k$ 
            \ENDIF
        \ENDFOR
    \end{algorithmic}
\end{algorithm}
Note that the only modification for the unconstrained dual-ascent described in Sec. \ref{sec:unconstrained-price-equilibrium} is that each user's subproblem now includes their budget constraint. Further, since each user's action depends only on the given price as well as their own utility and budget, the system operator needs no knowledge of the users' utility functions. 

\subsection{Existence and Uniqueness of Equilibrium Price}
 
The main result of this section is that Algorithm \ref{alg:constrained-price-update} converges to a unique equilibrium from any given initial price. 

\begin{theorem}
    Under the assumptions on $u_i$ and $C$ in Sec. \ref{sec:setup}, there exists a unique price $\lambda^* \in (0, \infty)$ at which supply equals demand and the market clears. Furthermore, Algorithm~\ref{alg:constrained-price-update} converge to $\lambda^*$ from any initial condition $\lambda^0 > 0$ where $x_i^*(\lambda)$ and $y^*(\lambda)$ are defined in \eqref{eqn:x(lam)_closed-form} and \eqref{eqn:y(lam)} respectively. \label{thm:alg1-conv}
\end{theorem}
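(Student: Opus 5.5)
We will work with the excess demand function
\[
z(\lambda) = \sum_{i=1}^n x_i^*(\lambda) - y^*(\lambda), \qquad \lambda > 0,
\]
where $x_i^*$ and $y^*$ are given by \eqref{eqn:x(lam)_closed-form} and \eqref{eqn:y(lam)}. Existence and uniqueness of $\lambda^*$ then amount to showing that $z$ is continuous and strictly decreasing on the relevant range and changes sign.

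\textbf{Monotonicity and continuity.} Since $u_i$ is strictly concave and $C^1$, $u_i'$ is continuous and strictly decreasing. Hence $(u_i')^{-1}(\lambda)$, extended as usual by $0$ when $\lambda \ge u_i'(0)$, is continuous and nonincreasing. The map $b_i/\lambda$ is continuous and nonincreasing (strictly decreasing if $b_i>0$). Taking $\max\{0,\min\{\cdot,\cdot\}\}$ preserves continuity and monotonicity, so each $x_i^*$ is continuous and nonincreasing on $(0,\infty)$.

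Similarly, $C'$ is continuous and strictly increasing, and coercivity gives $C'(y)\to\infty$. Therefore $y^*(\lambda)=(C')^{-1}(\lambda)$, set to $0$ for $\lambda\le C'(0)$, is continuous, nondecreasing, and strictly increasing for $\lambda > C'(0)$. It follows that $z$ is continuous and nonincreasing, and strictly decreasing wherever $y^*$ is strictly increasing or some $x_i^*$ is strictly decreasing.

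\textbf{Sign change.} As $\lambda\to\infty$, every $x_i^*\to 0$ while $y^*\to\infty$, so $z<0$ for large $\lambda$. At $\lambda=C'(0)$ we have $y^*=0$. By the standing assumption there is a user with $u_i'(0)>C'(0)$, and, assuming that user has $b_i>0$, $x_i^*(C'(0))>0$; hence $z(C'(0))>0$. If $C'(0)\le 0$, we instead argue as $\lambda\downarrow 0$, using $b_i/\lambda\to\infty$ and $y^*(\lambda)\to 0$. By the intermediate value theorem a root $\lambda^*>\max(0,C'(0))$ exists. It is unique because on $(C'(0),\infty)$ the function $z$ is strictly decreasing, while for $\lambda\le C'(0)$ we have $z>0$ whenever any $x_i^*$ is positive, and for $\lambda$ in that range this holds for the distinguished user.

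\textbf{Convergence of Algorithm~\ref{alg:constrained-price-update}.} The update $\lambda^{k+1}=\lambda^k+\alpha^k z(\lambda^k)$ is a root-finding iteration for a decreasing function. We use the Lyapunov function $V(\lambda)=(\lambda-\lambda^*)^2$ and compute
\[
V^{k+1}=V^k+2\alpha^k(\lambda^k-\lambda^*)z(\lambda^k)+(\alpha^k)^2 z(\lambda^k)^2 .
\]
The middle term is negative unless $\lambda^k=\lambda^*$. Together with boundedness of $z$ on a compact interval containing the iterates, and with steps satisfying $\sum\alpha^k=\infty$ and $\sum(\alpha^k)^2<\infty$ (or a sufficiently small constant step if $z$ is Lipschitz near $\lambda^*$), the standard argument used for dual ascent in Sec.~\ref{sec:unconstrained-price-equilibrium} gives $\lambda^k\to\lambda^*$. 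An alternative route is to exhibit $z$ as the negative gradient of a concave potential, namely $\int z$. This potential is concave because $z$ is decreasing, so the iteration is gradient ascent on it.

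\textbf{Main obstacle.} The hardest part is keeping iterates in $(0,\infty)$ and bounding $z$ near $0$, since $b_i/\lambda$ blows up there. We plan to project onto $[\epsilon,\infty)$ for some $\epsilon<\lambda^*$, or to choose the steps small enough that the iterates stay in a compact interval around $\lambda^*$.
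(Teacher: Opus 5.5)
Your existence and uniqueness argument follows the paper's: continuity and monotonicity of the excess demand, then the intermediate value theorem. You do it more carefully in two places. First, you set $y^*(\lambda)=0$ for $\lambda\le C'(0)$ and split the uniqueness argument at $C'(0)$; the paper's lemma instead asserts that $y^*=(C')^{-1}$ is strictly increasing on all of $(0,\infty)$. Second, you flag that the distinguished user needs $b_i>0$, and without it uniqueness can genuinely fail.

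For convergence you take a different route. The paper passes to the ODE $\dot\lambda=f(\lambda)$ via stochastic approximation and uses $V(\lambda)=-\int_{\lambda^*}^{\lambda}f(\mu)\,d\mu$. That is exactly your ``alternative route'' of gradient ascent on the concave potential $\int z$. Your primary argument, the discrete estimate on $(\lambda^k-\lambda^*)^2$, uses only the sign condition $(\lambda-\lambda^*)z(\lambda)<0$ and continuity of $z$.

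It also confronts what the paper's argument leaves implicit. The ODE method presupposes iterates that stay bounded and inside $(0,\infty)$ (and, in its standard form, a Lipschitz field), and the paper verifies none of this. Indeed a large step from a large price can land at a negative one. So ``from any $\lambda^0>0$'' really requires step-size restrictions or a projection, as you say.

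Your projection option closes easily. Suppose $\alpha^k\le\bar\alpha$ and you project onto $[\epsilon,\infty)$ with $z(\epsilon)>0$. Monotonicity of $z$ then gives $\lambda^k\le\max\{\lambda^0,\lambda^*+\bar\alpha\, z(\epsilon)\}$ for all $k$. The compact interval, and hence the bound on $z$, come for free, and projection is nonexpansive toward $\lambda^*$. The price is that you are analyzing a slightly modified Algorithm~\ref{alg:constrained-price-update}.

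One step of yours is wrong, though the paper has the same defect: your fallback for $C'(0)\le 0$. If $C'(0)<0$, then $y^*(\lambda)\to (C')^{-1}(0)>0$ as $\lambda\downarrow 0$, not $0$. In that case the theorem can actually fail. Take $C(y)=\tfrac12(y-10)^2$ and $u(x)=x-\tfrac12 x^2$. Then $u'(0)=1>C'(0)=-10$, yet $x^*(\lambda)\le 1$ and $y^*(\lambda)=10+\lambda$, so $z(\lambda)<0$ for every $\lambda>0$ and no positive market-clearing price exists.

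So you need the extra, economically natural hypothesis $C'(0)\ge 0$. For $C'(0)=0$ your argument goes through. The operative fact there is that the distinguished user's demand stays bounded away from $0$ as $\lambda\downarrow 0$ while $y^*\to 0$, not that $b_i/\lambda\to\infty$.

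Similarly, coercivity alone does not give $C'\to\infty$ (take $C(y)=y+e^{-y}$). The sign change survives, because supply still becomes unbounded as $\lambda\uparrow\sup C'$, but that step should be stated that way.
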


\begin{proof}
    The proof relies on the following lemma, which we state now and prove at the end.

    \begin{lemma}
    Under the assumptions on $u_i$ and $C$ in Sec. \ref{sec:setup}, the  function $f(\lambda) := \sum_i x_i^*(\lambda) - y^*(\lambda)$ \begin{enumerate}[label=(\roman*)]
        \item is continuous and strictly decreasing on $(0, \infty)$;
        \item $\lim_{\lambda \to \infty} f(\lambda) = - \infty$.
    \end{enumerate} 
    \label{lemma:price-dynamics-ftn}
\end{lemma}

First, we show the existence and uniqueness of an equilibrium price. 
By Lemma \ref{lemma:price-dynamics-ftn}, the price dynamics function $f(\lambda) := \sum_i x_i^*(\lambda) - y^*(\lambda)$ is continuous and strictly decreasing on $(0, \infty)$ with $f(\lambda) \to - \infty$ as $\lambda \to \infty$. Also, by assumption in Sec. \ref{sec:setup} that there exists at least one user $i$ such that $u_i'(0) > C'(0)$, then for sufficiently small $\lambda > 0$, $f(\lambda) > 0$ by strict monotonicity of $x_i^*(\lambda)$ and $y(\lambda)$. 
Thus, by the intermediate value theorem, there exists a unique $\lambda^* \in (0, \infty)$ such that $f(\lambda^*) = 0$.

Now, we show that Algorithm \ref{alg:constrained-price-update} converges to $\lambda^*$. We use the standard stochastic approximation approach~\cite{borkar2008stochastic} and treat the $\lambda$ update as a continuous time system $\Dot{\lambda} = f(\lambda)$.
Note that $\lambda^*$ is a fixed point of $\Dot{\lambda}$.

Define the Lyapunov function candidate \begin{align*}
    V(\lambda) = -\int_{\lambda^*}^\lambda f(\mu) \, d\mu
\end{align*}
First, note that since $f$ is continuous, $V$ is well-defined and continuously differentiable. Now, we verify that $V$ is a Lyapunov function for the fixed point $\lambda^*$. We have $V(\lambda^*) = 0$ by construction and $V(\lambda) > 0$ for all $\lambda \neq \lambda^*$ from the fact that $f$ is strictly decreasing. The time derivative of $V$ is $\Dot{V}= \frac{dV}{d\lambda}\Dot{\lambda} = \big(-f(\lambda)\big) f(\lambda) = -\big[ f(\lambda) \big]^2$,  and it is negative for all $\lambda \neq \lambda^*$ and $0$ for $\lambda^*$. 

It remains to prove Lemma \ref{lemma:price-dynamics-ftn}. We show (i) by inspecting the supply and demand sides separately. On the supply side, since $C$ is strictly convex, $C'$ is strictly increasing, and so $y^*(\lambda) = (C')^{-1}(\lambda)$ is continuous and strictly increasing by the inverse function theorem. Similarly, on the demand side, $(u_i')^{-1}$ is continuous and strictly decreasing. Since $\frac{b_i}{\lambda}$ is also continuous and strictly decreasing on $\lambda > 0$, then their pointwise minimum is continuous and strictly decreasing, and so $x_i^*(\lambda) = \max \left\{0, \min\left\{(u_i')^{-1}(\lambda), \frac{b_i}{\lambda}\right\} \right\}$ is continuous and non-increasing. Thus, $f = \sum_i x_i^*(\lambda) - y(\lambda)$ is continuous and strictly decreasing.

For (ii), note that each $x_i(\lambda) \leq \frac{b_i}{\lambda} \to 0$ as $\lambda \to \infty$ and so $\sum_i x_i(\lambda) \to 0$ as $\lambda \to \infty$ while  $y(\lambda) = (C')^{-1}(\lambda) \to \infty$ as $\lambda \to \infty$ since $C$ is coercive by assumption. Thus, $f(\lambda) \to - \infty$ as $\lambda \to \infty$.
\end{proof}

\section{Primal Interpretation via Modified Utility} \label{sec:primal}
 Theorem 1 guaranties the convergence to a unique equilibrium price $\lambda^*$, but it is through an iterative procedure. The primal problem in \eqref{eqn:primal} is not well defined since it depends on a price that needs to be found later. A natural question is whether we can interpret Algorithm 1 as solving some well-defined primal problem. 


We show that the equilibrium of Algorithm 1 is exactly the solution of an unconstrained welfare maximization problem with a modified utility function. Specifically, we show that there exist strictly concave, continuously differentiable utility functions $\hat{u}_i$ depending only on $u_i$ and $b_i$ such that 
\[
\arg \max_{x_i > 0} \hat{u}_i(x) - \lambda x = x^*(\lambda)
\]
for all $\lambda$, where $x_i^*(\lambda)$ is the budget-constrained demand defined in \eqref{eqn:x(lam)_closed-form}. 
This equivalence means that at every price $\lambda$, the allocation chosen by a budget-constrained user with utility $u_i$ is equal to that of an unconstrained user with the modified utility $\hat{u}_i$. 
At each iteration, Algorithm \ref{alg:constrained-price-update} computes user allocations, $x_i^*(\lambda^k)$, via their budget-constrained subproblems. By the above equivalence, its equilibrium is the unique solution to the modified unconstrained welfare maximization problem 
\begin{subequations} \label{eqn:modified-primal} 
\begin{align}
    (\hat{\PP}) \quad \max_{x,y} & \sum_{i = 1}^n \hat{u}_i(x_i) - C(y) \\
      \st \;\; & x_i \geq 0 \quad \forall i \\
      & \sum_{i = 1}^n x_i = y  \label{eqn:modified_power_balance}
\end{align}
\end{subequations}
This result is formalized in Theorem \ref{thm:primal-equiv} but first we construct the modified utility $\hat{u}$ explicitly. 

\subsection{Construction of Modified Utility}

Consider a single user with strictly concave, continuously differentiable utility $u$ and budget $b > 0$. Recall, the budget-constrained demand of the user at price $\lambda$ restricted to the positive domain is given by  (\ref{eqn:x(lam)_closed-form}): \[
x^*(\lambda) = \max \left\{ 0, \min\left\{ (u')^{-1}(\lambda), \frac{b}{\lambda} \right\}\right\}.
\]
The key observation is that the budget-binding demand $\frac{b}{\lambda}$ can be written as \begin{align*}
    \frac{b}{\lambda} = \arg \max_{x > 0} b \log x - \lambda x
\end{align*}
since $\frac{b}{\lambda}$ satisfies $\frac{d}{dx}(b\log x - \lambda x )\big|_{x = \frac{b}{\lambda}} = \frac{b}{\frac{b}{\lambda}} - \lambda = 0$ and $b \log x - \lambda x$ is strictly concave. Then, we define $\hat{u}$ so that its marginal utility equals $u'(x)$ where the budget does not bind and $\frac{d}{dx} b \log x = \frac{b}{x}$ where the budget does bind.

The crossover points between the two regimes (binding budget and slack budget) are the nonnegative solutions to \begin{align}
    u'(\Tilde{x}) = \frac{b}{\Tilde{x}}. \label{eqn:crossover-pts}
\end{align}
Let $0 = \Tilde{x}_0 < \Tilde{x}_1 <  \dots < \Tilde{x}_k < \Tilde{x}_{k + 1} = \infty$ denote these crossover points as well as the boundary conventions on $x$. (Note that the number of crossover points $k$ depends on the shape of $u$ and the value of $b$.) On each interval $(\Tilde{x}_{j - 1}, \Tilde{x}_j)$, either $u'(x) \leq \frac{b}{x}$ for all $x$ or $u'(x) > \frac{b}{x}$ for all $x$. Thus, the modified utility $\hat{u}$ is  defined piecewise as follows \begin{align}
    \hat{u}(x) = \begin{cases}
        u(x) + c_j & x \in (\Tilde{x}_{j - 1}, \Tilde{x}_j), \quad u'(x) \leq \frac{b}{x} \\
        b \log x + d_j & x \in (\Tilde{x}_{j - 1}, \Tilde{x}_j), \quad u'(x) > \frac{b}{x}
    \end{cases} \label{eqn:piecewise-formulization}
\end{align}
where the constants $c_j$ and $d_j$ are determined sequentially to ensure continuity of $\hat{u}$ at each crossover point $\Tilde{x}_j$ with $c_1 = d_1 = 0$ so that $\hat{u}(x) = \min\{u(x), b\log x\} $ on $(0, \Tilde{x}_1)$.

By construction, $\hat{u}$ is continuous everywhere and continuously differentiable on each interval $(\Tilde{x}_{j - 1}, \Tilde{x}_j)$. Furthermore, $\hat{u}$ is differentiable at each crossover point since from the slack budget side  $\hat{u}'(\Tilde{x}_j) = u'(\Tilde{x}_j) $ and from the binding budget side $ \frac{b}{\Tilde{x}_j} = \hat{u}'(\Tilde{x}_j)$ which are equal by \eqref{eqn:crossover-pts}. Therefore, $\hat{u}$ is continuously differentiable on $(0, \infty)$ and \begin{align}
    \hat{u}'(x) = \begin{cases}
    u'(x) & u'(x) \leq \frac{b}{x} \\
    \frac{b}{x} & u'(x) > \frac{b}{x}
\end{cases} = \min \left\{u'(x), \frac{b}{x} \right\}.
\label{eqn:u_hat'}
\end{align}
\subsection{Primal Equivalence Theorem}
We now state the main result of this section that the equilibrium of Algorithm \ref{alg:constrained-price-update} solves the modified welfare maximization problem $(\hat{\PP})$ defined in \eqref{eqn:modified-primal}. 

\begin{theorem}[Primal Equivalence]
Consider $n$ users with strictly concave, continuously differentiable utilities $u_i$ and budgets $b_i > 0$, served by a system operator with strictly convex, continuously differentiable cost $C$. Let $\hat{u}_i$ be the modified utility for user $i$ as defined in \eqref{eqn:piecewise-formulization}. Then, $(\hat{\PP})$ 
is a convex optimization problem whose unique optimal allocations $(\hat{x}^*, \hat{y}^*)$ and optimal dual $\hat{\lambda}^*$ equal the equilibrium allocations $(x^*, y^*)$ and price $\lambda^*$ from Algorithm \ref{alg:constrained-price-update}.
\label{thm:primal-equiv}
\end{theorem}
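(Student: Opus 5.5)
The plan has three steps. First, show that $(\hat{\PP})$ is convex. Next, use strong duality to read off its KKT conditions. Finally, match these conditions with the market-clearing characterization from Theorem~\ref{thm:alg1-conv}.

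\textbf{Convexity.} By \eqref{eqn:u_hat'}, $\hat{u}_i'(x) = \min\{u_i'(x), b_i/x\}$ on $(0,\infty)$. Each argument of the minimum is continuous and strictly decreasing, since $u_i$ is strictly concave and $b_i>0$. Hence $\hat{u}_i'$ is continuous and strictly decreasing, and $\hat{u}_i$ is strictly concave and $C^1$ on $(0,\infty)$. The objective $\sum_i \hat{u}_i(x_i) - C(y)$ is therefore strictly concave, and the constraints are affine, so $(\hat{\PP})$ is a convex program.

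One technicality needs care. Near $0$ we have $\hat{u}_i = \min\{u_i, b_i\log x\}$, which tends to $-\infty$, so $\hat{u}_i$ is only an extended-valued concave function at $x=0$. I would handle this by noting that optimal $x_i$ must be positive whenever $x_i^*(\lambda)>0$, because $\hat{u}_i'(0^+) = \min\{u_i'(0), +\infty\} = u_i'(0)$. More simply, I would treat $\hat{u}_i$ as concave on $[0,\infty)$ with value $u_i(0)$ at $0$, using the convention $c_1$ and the fact that $\hat{u}_i = u_i$ on $(0,\tilde{x}_1)$ when $u_i'(x) \le b_i/x$ there, which always holds near $0$.

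\textbf{Existence and uniqueness of the primal optimum.} Coercivity of $C$ is the key ingredient. The concave $\hat{u}_i$ grow at most linearly, with slope at most $\hat{u}_i'(0^+)$, while $C(y)$ grows superlinearly. Hence the objective tends to $-\infty$ along the feasible set, and a maximizer exists. Strict concavity makes it unique. Slater's condition holds trivially because there are only affine constraints, so strong duality holds and a dual optimal $\hat\lambda^*$ exists.

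\textbf{Matching with the equilibrium.} This is the main step. The KKT conditions with multiplier $\hat\lambda$ for \eqref{eqn:modified_power_balance} say three things:
\begin{itemize}
\item each $\hat{x}_i$ maximizes $\hat{u}_i(x_i) - \hat\lambda x_i$ over $x_i\ge 0$;
\item $\hat y$ maximizes $\hat\lambda y - C(y)$ over $y \ge 0$;
\item $\sum_i \hat{x}_i = \hat y$.
\end{itemize}

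I would prove the key identity
\[
\arg\max_{x\ge 0}\, \hat{u}_i(x) - \lambda x = x_i^*(\lambda) \quad \text{for all } \lambda>0.
\]
The proof goes as follows. By strict concavity, the maximizer is $0$ if $\hat{u}_i'(0^+) \le \lambda$, and otherwise it is the unique root of $\min\{u_i'(x), b_i/x\} = \lambda$. That root is exactly $\min\{(u_i')^{-1}(\lambda), b_i/\lambda\}$, because inverting the minimum of two decreasing functions gives the minimum of their inverses. Also, $\hat{u}_i'(0^+) = u_i'(0)$, which matches the $\max\{0,\cdot\}$ in \eqref{eqn:x(lam)_closed-form}. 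The identity itself is routine once monotonicity is in hand.

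With the identity, KKT becomes $\hat{x}_i = x_i^*(\hat\lambda)$, $\hat y = y^*(\hat\lambda)$, and $f(\hat\lambda) = 0$. Lemma~\ref{lemma:price-dynamics-ftn} and Theorem~\ref{thm:alg1-conv} say $f$ has a unique zero $\lambda^*$, so $\hat\lambda^* = \lambda^*$. The allocations then coincide with $(x^*(\lambda^*), y^*(\lambda^*))$, which are the limits of Algorithm~\ref{alg:constrained-price-update}.

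I expect the main obstacle to be the boundary behavior at $x=0$. This means showing the extended-valued $\hat{u}_i$ gives a proper convex problem where KKT applies, and excluding $\hat\lambda \le 0$. The latter is handled by noting that $C'$ is increasing with $y^*(\lambda)=(C')^{-1}(\lambda)$ and by the nontriviality assumption $u_i'(0)>C'(0)$, which forces $\hat y>0$ and hence $\hat\lambda = C'(\hat y) > 0$.
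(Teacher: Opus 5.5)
Your proposal is correct and follows essentially the same route as the paper: strict concavity of $\hat u_i$ from $\hat u_i'=\min\{u_i',b_i/x\}$, strong duality via Slater, the per-user identity $\arg\max_{x\ge 0}\{\hat u_i(x)-\lambda x\}=x_i^*(\lambda)$ (the paper's Lemma~\ref{lemma:modified-u}(iii)) to turn the KKT conditions into $f(\hat\lambda^*)=0$, and uniqueness of that zero from Theorem~\ref{thm:alg1-conv}. You are more careful than the paper about the corner solution at $x=0$ and about $\hat\lambda^*>0$, although that step tacitly uses $C'(0)\ge 0$, as Theorem~\ref{thm:alg1-conv} itself does; the extended-valued $\hat u_i$ you worry about never arises, because $u_i'(0)$ is finite, so $u_i'(x)<b_i/x$ near $0$ and, with $c_1=0$, $\hat u_i=u_i$ on $(0,\tilde x_1)$ despite the paper's $\min\{u,b\log x\}$ remark, which is exactly the resolution you settle on.
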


\begin{proof}
    The proof relies on an intermediate lemma which we state now and prove at the end. 

\begin{lemma}
    Let $u : (0, \infty) \to \R$ be strictly concave and continuously differentiable, and let $b > 0$. Define $\hat{u}$ as in (\ref{eqn:piecewise-formulization}). Then,
    \begin{enumerate}[label=(\roman*), leftmargin=2em]
        \item $\hat{u}$ is continuously differentiable on $(0, \infty)$,
        \item $\hat{u}$ is strictly concave on $(0, \infty)$, \item For all $\lambda > 0$, \[
        \arg\max_{x > 0} \hat{u}(x) - \lambda x = \max \left\{0 , \min \left\{ (u')^{-1}(\lambda), \frac{b}{\lambda} \right\}\right\}
        \]
    \end{enumerate} 
    \label{lemma:modified-u}
\end{lemma}

By Lemma \ref{lemma:modified-u}, each $\hat{u}_i$ is strictly concave and by assumption $C$ is strictly convex and coercive. Thus, the objective $\sum_i \hat{u}_i(x_i) - C(y)$ is strictly concave on the convex constraint set $\left\{(x, y) \in \R^n \times \R \; : x_i \geq 0, \; \sum_i x_i = y \right\}$ and so $(\hat{\PP})$ is a convex optimization that attains a unique optimum $(\hat{x}^*, \hat{y}^*)$. 

Further, since $(\hat{P})$ is convex and satisfies Slater's condition, strong duality holds. Hence, the optimal dual variable $\hat{\lambda}^*$ of the power balance constrained \eqref{eqn:modified_power_balance} satisfies the KKT conditions $\hat{u}_i'(\hat{x}_i^*) = \hat{\lambda}^*$ for every user $i$ and $C'(\hat{y}^*) = \hat{\lambda}^*$.
Thus, $\hat{x}_i^* = x_i^*(\hat{\lambda}^*)$ by Lemma \ref{lemma:modified-u}(iii) and $\hat{y}^* = y^*(\hat{\lambda}^*)$. Since $\lambda^*$ is the unique market clearing price resulting from Algorithm 1  by Theorem \ref{thm:alg1-conv}, we conclude that $\hat{\lambda}^* = \lambda^*$.

It remains to prove Lemma \ref{lemma:modified-u}. Property (i) follows directly from the construction of $\hat{u}$ and the matching of derivatives at the crossover points via \eqref{eqn:crossover-pts}. 

For property (ii), each piece of $\hat{u}$ is strictly concave (either $u + \textrm{const.}$ or $b \log x + \text{const.}$ both of which are concave). At each crossover point $\Tilde{x}_j$, continuous differentiability of $\hat{u}$ from property (i) ensures that $\hat{u}'$ has no upward jumps. Thus, $\hat{u}'$ is strictly decreasing on each piece and across each crossover point. By induction on $j$, we have that $\hat{u}$ is strictly concave.

For property (iii), by (i) and (ii), $\hat{u}(x) - \lambda x$ is strictly concave on $(0, \infty)$ and thus has a unique maximizer $x^*$ satisfying the first order condition $\hat{u}'(x^*) = \lambda$. By \eqref{eqn:u_hat'}, this means that $\min\left\{ u'(x^*), \frac{b}{x^*}\right\} = \lambda$. If $u'(x^*) \leq \frac{b}{x^*}$, then $u'(x^*) = \lambda$ so $x^* = (u')^{-1}(\lambda)$ and the budget is slack since $(u')^{-1}(\lambda) \leq \frac{b}{\lambda}$. Otherwise, $\frac{b}{x^*} = \lambda$ so $x^* = \frac{b}{\lambda}$ and the budget is binding since $(u')^{-1}(\lambda) > \frac{b}{\lambda}$. In both cases, $x^* = \min \left\{(u')^{-1}(\lambda), \frac{b}{\lambda} \right\}$.
\end{proof}


\subsection{Examples}
\subsubsection{Quadratic Utilities (two crossover points)} Consider $u(x) =  \beta x - \frac{\alpha}{2} x^2$ with budget $b$. The crossover equation $u'(\Tilde{x}) = \frac{b}{\Tilde{x}}$ becomes $\beta -\alpha \Tilde{x} = \frac{b}{\Tilde{x}}$, or equivalently, $\alpha \Tilde{x}^2 - \beta\Tilde{x} + b = 0$.  When $\beta^2 < 4 \alpha b$, there are no crossover points and $\hat{u} = u$. When $\beta^2 \geq 4 \alpha b$, there are two crossover points $\Tilde{x}_{lo} = \frac{\beta - \sqrt{\beta^2 - 4\alpha b}}{2 \alpha}$ and $\Tilde{x}_{hi} = \frac{\beta + \sqrt{\beta^2 - 4\alpha b}}{2 \alpha}$, and the modified utility is \[
\hat{u}(x) = \begin{cases}
    \beta x - \frac{\alpha}{2}x^2 & x \leq \Tilde{x}_{lo} \\
    b \log x + c & \Tilde{x}_{lo} \leq x \leq \Tilde{x}_{hi} \\
     \beta x - \frac{\alpha}{2}x^2 + d
\end{cases}
\]
where $c := u(\Tilde{x}_{lo})$ and $d := b \log \Tilde{x}_{hi} + c$ are chosen such that $\hat{u}$ is continuous. 

For concrete examples, Fig. \ref{fig:u_hat-quadratics} shows the original and modified utility curves for $u_1(x) = 3x - 0.1 x^2$ with budget $b_1 = 5$ and $u_2(x) = 5x - 0.5x^2$ with budget $b_2 = 4$.
\begin{figure}[ht]
    \centering
    \includegraphics[width=\linewidth]{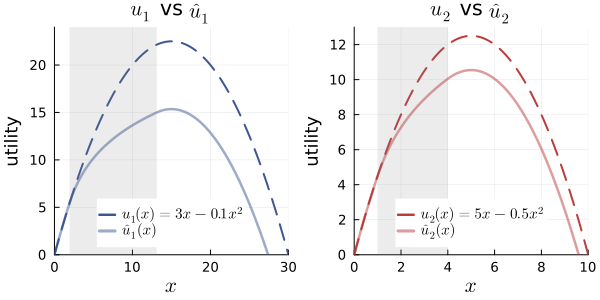}
    \caption{Original utilities $u_i$ (dashed) and modified utilities $\hat{u}_i$ (solid) for the two user quadratic examples. The modified utility replaces the original utility with $b \log x  + \textrm{const.}$ in regions where the budget binds (shaded gray), resulting in a slower-growing function which captures budget-constrained utility behavior.}
    \label{fig:u_hat-quadratics}
\end{figure}

\subsubsection{Square Root Utilities (one crossover point)}
Consider $u(x) = \alpha \sqrt{x}$ where $\alpha > 0$ and with budget $b$. The crossover equation becomes $\frac{\alpha}{2\sqrt{\Tilde{x}}} = \frac{b}{\Tilde{x}}$ which gives the single crossover point $\Tilde{x} = \frac{\alpha^2}{4b^2}$, and the modified utility is \[
\hat{u}(x) = \begin{cases}
    \alpha \sqrt{x} & x \leq \Tilde{x} \\
    b \log x + c & x > \Tilde{x}
\end{cases}
\]
where $c:= u(\Tilde{x})$.

For a concrete example, Fig. \ref{fig:u_hat-sqrts} shows the original and modified utility curves for $u_1(x) = 5\sqrt{x}$ with budget $b_1 = 3$. Fig. \ref{fig:u_hat-sqrts} also shows the original and modified utility curves for $u_2(x) = 10 \sqrt{x} - x$ with budget $b_2 = 4.5$ which can be thought of as the composition of a square root utility and quadratic utility. 

\begin{figure}[ht]
    \centering
    \includegraphics[width=\linewidth]{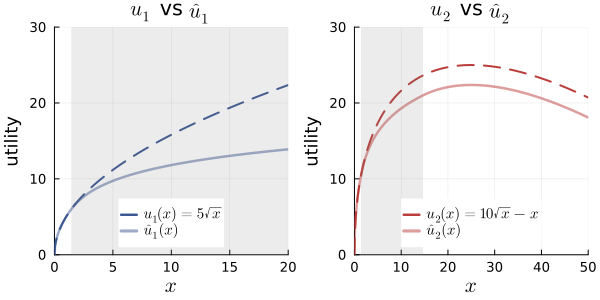}
    \caption{Original utilities $u_i$ (dashed) and modified utilities $\hat{u}_i$ (solid) for the two user square root examples. The modified utility replaces the original utility with $b \log x  + \textrm{const.}$ in regions where the budget binds (shaded gray), resulting in a slower-growing function which captures budget-constrained utility behavior.}
    \label{fig:u_hat-sqrts}
\end{figure}

\section{Simulations}
We validate our theoretical results numerically. In each example, we compare the solutions of the unconstrained problem $(\PP)$ and the modified problem $(\hat{\PP})$ from Theorem \ref{thm:primal-equiv} obtained via Algorithm \ref{alg:constrained-price-update}. Throughout, we use the cost function $C(y) = \frac{1}{2}y^2$.

\subsection{Two users with quadratic utilities} Consider $u_1(x) = 3x - 0.1 x^2$ with budget $b_1 = 5$ and $u_2(x) = 5 x - 0.5 x^2$ with budget $b_2 = 4$. Fig. \ref{fig:convergence} shows the convergence of Algorithm \ref{alg:constrained-price-update} to the equilibrium price $\lambda^*$. 
Fig. \ref{fig:two-quads-supply-and-demand} shows the supply and demand curves as functions of $\lambda$. Each user's budget-constrained demand $x_i^*(\lambda)$ (solid) is the minimum of the unconstrained demand $(u_i')^{-1}$ and the budget curve $\frac{b_i}{\lambda}$ (dashed). The equilibrium price $\lambda^*$ is the point where aggregate demand $\sum_i x_i^*(\lambda)$ equals supply $y^*(\lambda)$ verifying that $\lambda$ is a competitive equilibrium price. Table~\ref{tab:two-quadratic} shows how the optimal allocation and price change between the unconstrained and the budget constrained cases. 

\begin{figure}[ht]
    \centering
    \includegraphics[width=0.8\linewidth]{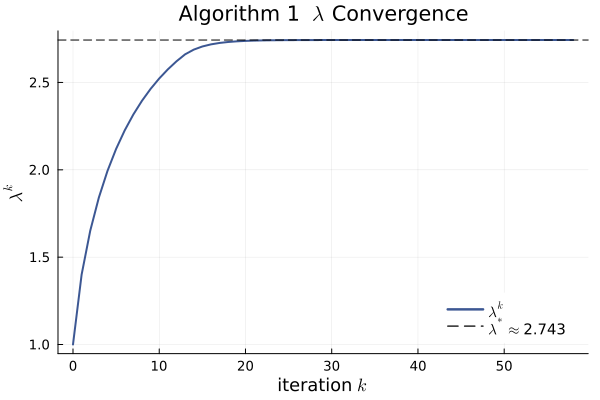}
    \caption{Convergence of the price iterate $\lambda^k$ in Algorithm \ref{alg:constrained-price-update} to the equilibrium price $\lambda^* = 2.743$ for the two-user quadratic utility example.}
    \label{fig:convergence}
\end{figure}

\begin{figure}[ht]
    \centering
    \includegraphics[width=0.8\linewidth]{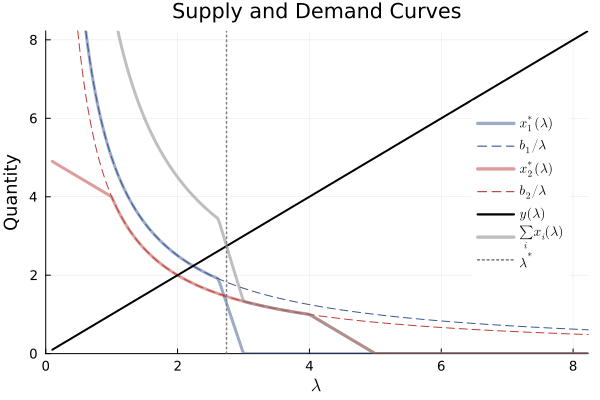}
    \caption{Supply and demand curves for the two-user quadratic utilities example. The competitive equilibrium price $\lambda^*$ is where aggregate demand (gray) intersects supply (black).}
    \label{fig:two-quads-supply-and-demand}
\end{figure}


\renewcommand{\arraystretch}{1.5}
\begin{table}[ht]
    \centering
    \caption{Equilibrium allocations and expenditures for the quadratic utility example}
    \label{tab:two-quadratic}
    \begin{tabular}{cc cc c cc  cc}
    \hline\hline
    && \multicolumn{2}{c}{Unconstrained} & & \multicolumn{2}{c}{Budget Constrained} & \\
    \cline{3-4} \cline{6-7} 
    && $x_i^*$ & $\lambda x_i^*$ & & $x_i^*$ & $\lambda x_i^*$ && $b_i$ \\
    \hline 
    User 1 && 0.714 & 2.041 & & 1.285 & 3.524 && 5.0 \\
    User 2 && 2.142 & 6.122 & & 1.458 & 4.000 && 4.0 \\
    \hline 
    $\lambda$ && \multicolumn{2}{c}{2.857} & & \multicolumn{2}{c}{2.743} && \\
    \hline \hline
\end{tabular}
\end{table}

\subsection{Five users with mixed utilities} 
To test the framework on more complex example, we consider five users with a mix of quadratic and square root utility functions with various budgets. Fig. \ref{fig:convergence-5} shows the convergence of Algorithm \ref{alg:constrained-price-update} to the equilibrium price $\lambda^*$. Table~\ref{tab:five-mixed} shows how the optimal allocation and price change between the unconstrained and the budget constrained cases.


\begin{figure}[ht]
    \centering
    \includegraphics[width=0.8\linewidth]{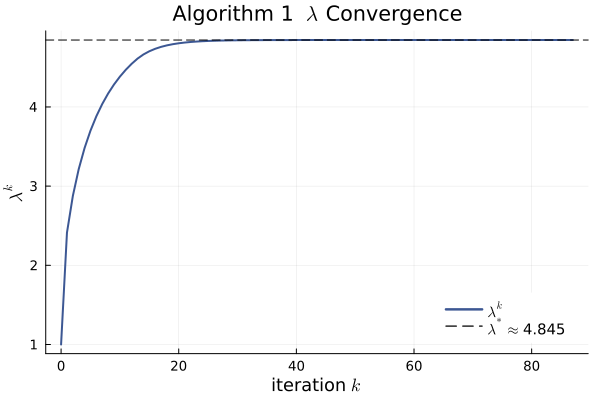}
    \caption{Convergence of the price iterate $\lambda^k$ in Algorithm \ref{alg:constrained-price-update} to the equilibrium price $\lambda^* = 4.845$ for the five-user mixed utility example.}
    \label{fig:convergence-5}
\end{figure}

\begin{table}[ht]
    \centering
    \caption{Equilibrium allocations and expenditures for 5 users}
    \label{tab:five-mixed}
    \begin{tabular}{cc cc c cc  cc}
    \hline\hline
    && \multicolumn{2}{c}{Unconstrained} & & \multicolumn{2}{c}{Budget Constrained} & \\
    \cline{3-4} \cline{6-7} 
    && $x_i^*$ & $\lambda x_i^*$ & & $x_i^*$ & $\lambda x_i^*$ && $b_i$ \\
    \hline 
    User 1 && 0.000 & 0.000 & & 0.310 & 1.500 && 4.0 \\
    User 2 && 1.530 & 8.370 & & 1.032 & 5.000 && 5.0 \\
    User 3 && 2.139 & 11.701 && 1.238 & 6.000 && 6.0 \\
    User 4 && 0.597 & 3.267 && 0.732 & 3.545 && 7.0 \\
    User 5 && 1.203 & 6.582 && 1.534 & 7.430 && 8.0 \\
    \hline 
    $\lambda$ && \multicolumn{2}{c}{5.470} & & \multicolumn{2}{c}{4.845} && \\
    \hline \hline
\end{tabular}
\end{table}

\section{Conclusions and Future Directions}
We studied resource allocation in electricity markets where customers face budget constraints on their expenditures. The main challenge is that budget constraints involve the equilibrium price, which is itself determined by the solution, making the problem self-referential and difficult to work with. 
We showed that a natural extension of the standard dual ascent algorithm (Algorithm \ref{alg:constrained-price-update}) converges to a unique competitive equilibrium from any initial price. 
The main contribution is the primal equivalence result (Theorem \ref{thm:primal-equiv}), specifically, the budget-constrained equilibrium is exactly the solution of the convex welfare maximization problem modified utility functions $\hat{u}_i$. These modified utilities are constructed piecewise as the original utility $u_i$ where the budget is slack and as $b_i \log x$ where the budget is binding. This construction applies to general concave utilities with any number of crossover points. 

A number of future directions remain. First, this paper considers a single-shot setting, where in practice the tariff schemes are often designed for long periods. 
Second, the modified welfare problem $(\hat{\PP})$ yields efficient allocations, it does not address fairness among users. Incorporating fairness criteria, such as proportional fairness, into the budget-constrained setting is a natural next step. 
Third, we consider a single-bus system with no network constraints. Incorporating transmission limits and nodal pricing would require extending the framework to more complex coupling constraints between users and vector-valued prices. 
Lastly, many budget-constrained customers are small and should be studied in aggregate. Modeling this cooperative behavior is within this budget-constrained framework an open problem.


\section{ACKNOWLEDGMENTS}

The authors would like to thank Ling Zhang and Yan Jiang for helpful discussions.


\bibliographystyle{IEEEtran}
\bibliography{refs.bib}

\end{document}